\newtheorem{theorem}{Theorem}[section]
\newtheorem{proposition}[theorem]{Proposition}
\newtheorem{lemma}[theorem]{Lemma}
\newtheorem{corollary}[theorem]{Corollary}
\newtheorem{remark}{Remark}
\newcommand{\be}{\begin{equation}}
\newcommand{\ee}{\end{equation}}
\newcommand{\bea}{\begin{eqnarray}}
\newcommand{\eea}{\end{eqnarray}}
\newcommand{\e}{\emph e}
\numberwithin{equation}{section}
\begin{document}

\title{The Smallest Eigenvalue Distribution of the \\Jacobi Unitary Ensembles}

\author{Shulin Lyu\thanks{School of Mathematics (Zhuhai), Sun Yat-sen University, Zhuhai 519082, China; e-mail: lvshulin1989@163.com} ~and Yang Chen\thanks{Department of Mathematics, Faculty of Science and Technology, University of Macau, Macau, China; e-mail: yangbrookchen@yahoo.co.uk}}

\date{}

\maketitle
\begin{abstract}
In the hard edge scaling limit of the Jacobi unitary ensemble generated by the weight $x^{\alpha}(1-x)^{\beta},~x\in[0,1],~\alpha,\beta>0$,
the probability that all eigenvalues of Hermitian matrices from this ensemble lie in the interval $[t,1]$ is given by the Fredholm determinant of the Bessel kernel. We derive the constant in the asymptotics of this Bessel-kernel determinant. A specialization of the results gives the constant in the asymptotics of the probability that the interval $(-a,a),a>0,$ is free of eigenvalues in the Jacobi unitary ensemble with the symmetric weight $(1-x^2)^{\beta}, x\in[-1,1]$.
\end{abstract}

$\mathbf{Keywords}$: Jacobi unitary ensemble; Smallest eigenvalue distribution; Asymptotic

expansions; Bessel kernel; Fredholm determinant.

$\mathbf{Mathematics\:\: Subject\:\: Classification\:\: 2020}$: 15B52; 34E05; 41A60; 42C05.

\section{Introduction}
In this paper, we consider the Jacobi unitary ensemble (JUE for short) of $n\times n$ Hermitian matrices whose eigenvalues have the following probability density function
\begin{align}\label{Jp}
p(x_1,x_2,\ldots,x_n)=\frac{1}{C_n}\prod_{1\leq i<j\leq n}(x_i-x_j)^2\prod_{k=1}^{n}w(x_k,\alpha,\beta),
\end{align}
where $w(x,\alpha,\beta)$ is the modified Jacobi weight reading
\[w(x,\alpha,\beta)=x^{\alpha}(1-x)^{\beta},\qquad x\in[0,1],\quad\alpha,\beta>0.\]
Here $C_n$ is the normalization constant which has a closed-form expression
\begin{align*}
C_n:=&\int_{[0,1]^n} \prod_{1\leq i<j\leq n}(x_i-x_j)^2\prod_{k=1}^{n}w(x_k,\alpha,\beta)dx_k\\
=&\prod_{j=0}^{n-1}\frac{\Gamma(2+j)\Gamma(\alpha+1+j)\Gamma(\beta+1+j)}{\Gamma(2)\Gamma(\alpha+\beta+1+n+j)}.
\end{align*}
See \cite[formula (17.1.3)]{Mehta}.

The probability that all eigenvalues of this unitary ensemble lie in the interval $[t,1]$, or equivalently the smallest eigenvalue is not less than $t$, is given by
\begin{align}\label{PD}
\mathbb{P}(t,\alpha,\beta,n)=&\frac{D_n(t,\alpha,\beta)}{D_n(0,\alpha,\beta)},
\end{align}
where $D_n(t,\alpha,\beta)$ is the following multiple integral
\begin{align*}
D_n(t,\alpha,\beta):=\frac{1}{n!}\int_{[t,1]^n}\prod\limits_{1\leq i<j\leq n}(x_i-x_j)^2\prod\limits_{k=1}^n w(x_k,\alpha,\beta)dx_k.
\end{align*}
Obviously $D_n(0,\alpha,\beta)=C_n/n!$. By changing variables $x_{\ell}=(1-y_{\ell})/2,\ell=1,2,\cdots,n$ in $D_n(t,\alpha,\beta)$, we have
\begin{align*}
D_n(t,\alpha,\beta)=2^{-n(n+\alpha+\beta)}\frac{1}{n!}\int_{[-1,1-2t]^n}\prod\limits_{1\leq i<j\leq n}(y_i-y_j)^2\prod\limits_{k=1}^n (1-y_k)^{\alpha}(1+y_k)^{\beta}dy_k.
\end{align*}
Hence, from \eqref{PD} we find that $\mathbb{P}(t,\alpha,\beta,n)$ is connected with the largest eigenvalue distribution of JUE with the weight $(1-x)^{\alpha}(1+x)^{\beta}$ by
\begin{align}\label{Pconnection}
\mathbb{P}(t,\alpha,\beta,n)=\widehat{\mathbb{P}}(1-2t,\alpha,\beta,n),
\end{align}
where $\widehat{\mathbb{P}}(1-2t,\alpha,\beta,n)$ is the probability that all eigenvalues of $n\times n$ Hermitian matrices from JUE with the weight $(1-x)^{\alpha}(1+x)^{\beta}$ lie in the interval $[-1,1-2t]$.

By representing the associated orthogonal polynomials as a solution of a Riemann-Hilbert problem which was analyzed using Deift-Zhou nonlinear steepest descent method \cite{DeiftZhou} (we call it RH method for short below), Kuijlaars and Vanlessen \cite[Corollary 1.2]{KV2002} proved that
\begin{align}\label{PJbessel}
\lim_{n\rightarrow\infty}\widehat{\mathbb{P}}\left(1-\frac{s}{2n^2},\alpha,\beta,n\right)=\det\left(I-K_{\mathrm{Bessel}}\right),
\end{align}
where $\det\left(I-K_{\mathrm{Bessel}}\right)$ is the Fredholm determinant and $K_{\mathrm{Bessel}}$ is the integral operator with the Bessel kernel
\begin{align}\label{Besal}
K_{\mathrm{Bessel}}(x,y):=\frac{J_{\alpha}(\sqrt{x})\sqrt{y}J_{\alpha}'(\sqrt{y})-\sqrt{x}J_{\alpha}'(\sqrt{x})J_{\alpha}(\sqrt{y})}{2(x-y)},
\end{align}
acting on $L^2(0,s)$. Therefore, it follows from \eqref{Pconnection} and \eqref{PJbessel} that
\begin{align}\label{limPBes}
\lim_{n\rightarrow\infty}\mathbb{P}\left(\frac{s}{4n^2},\alpha,\beta,n\right)=\det\left(I-K_{\mathrm{Bessel}}\right),
\end{align}
This Bessel-kernel determinant also describes the smallest eigenvalue distribution of Laguerre unitary ensembles \cite{Forrester1993, TW161}.

It is well known that the gap probability on $(a,b)$, i.e. the probability that the interval $(a,b)$ has no eigenvalues of unitary ensembles is given by the Fredholm determinant
\[\det\left(I-K_n\mathlarger {\chi}_{(a,b)}\right),\]
where $\mathlarger{\mathlarger {\chi}}_{(a,b)}(\cdot)$ is the characteristic function of the interval $(a,b)$ and the integral operator $K_{n}\mathlarger{\mathlarger {\chi}}_{(a,b)}$  has kernel $K_n(x,y)\mathlarger{\mathlarger {\chi}}_{(a,b)}(y)$.
The kernel $K_n(x,y)$ tends to the sine kernel by scaling the bulk of the spectrum of Gaussian unitary ensembles \cite{TW159, TW161}. The constant in the asymptotics of the corresponding Fredholm determinant was conjectured by Dyson \cite{Dyson} to be $(\log 2)/12+3\zeta'(-1),$ with $\zeta(\cdot)$ denoting the Riemann zeta function, and was proved in \cite{DIKZ2007, Krasovsky2004} via RH method and in \cite{Ehrhardtsine} by using operators. At the soft edge of spectrum of Gaussian and Laguerre unitary ensembles, the kernel $K_n(x,y)$ is the Airy kernel \cite{Forrester1993}. The constant term in the asymptotics of the Fredholm determinant was conjectured by Tracy and Widom \cite{TW161} to be $(\log2)/24+\zeta'(-1)$, and was proved in \cite{DIK2008} by means of RH method, in \cite{BBD} by using an integral expression of the Tracy-Widom distribution, and in \cite{LyuMinChen2020} via the same derivation framework as this paper.

In addition to being applied to derive the constant term in the asymptotics of the gap probability of unitary ensembles, RH method is a powerful tool to study many other problems in large dimensional unitary ensembles, for example, the partition function \cite{ACM,DaiXuZhang2019}, the gap probability distribution \cite{DaiXuZhang2020,XuZhao2019}, the correlation kernel \cite{ChenChenFan}, orthogonal polynomials \cite{ChenChenFan2019}. For finite $n$ analysis, the ladder operators adapted to monic orthogonal polynomials are usually used. See for instance \cite {BasorChenZhang2012,Lyu2019,MinChen2019,MinChen2020}.

Our main purpose in this paper is to derive the constant term in the asymptotic expansion of $\mathbb{P}\left(\frac{s}{4n^2},\alpha,\beta,n\right)$, i.e. the Fredholm determinant of the Bessel kernel (refer to \eqref{limPBes}), in large $s$ with $n$ sufficiently large, which was conjectured by Tracy and Widom \cite{TW161} to be
\[\frac{G(\alpha+1)}{(2\pi)^{\alpha/2}}.\]
It was proved rigorously by Ehrhardt \cite{Ehrhardtbessel} for $|{\rm Re}~\alpha|<1$ via operator approach and by Deift, Krasovsky and Vasilevska \cite{DKV2010} for ${\rm Re}~a>-1$ through RH method. Our derivation is elementary, mainly based on the properties of orthogonal polynomials.

By means of the ladder operator approach, Chen and Zhang \cite{ChenZhang2010} investigated the Hankel determinant generated by the Jacobi weight with a jump, i.e. $x^{\alpha}(1-x)^{\beta}(A+B\theta(x-t)),x\in[0,1],$
with $A\geq0,~ A+B\geq0$, and $\theta(\cdot)$ denoting the Heaviside step function. We observe that the case where $A=0$ and $B=1$ corresponds to $D_n(t,\alpha,\beta)$ of our interest.
Since $D_n(0,\alpha,\beta)$ has an explicit expression, to study $\mathbb{P}(t,\alpha,\beta,n)$ which is equal to the quotient of $D_n(t,\alpha,\beta)$ and $D_n(0,\alpha,\beta)$, we make use of the results in \cite{ChenZhang2010} to deal with $D_n(t,\alpha,\beta)$.

Following the approach in \cite{LyuMinChen2020} which is motivated by the analysis in \cite{DIK2008} and \cite{DIKZ2007} , we study the asymptotic behavior of $\mathbb{P}(t,\alpha,\beta,n)$ from two aspects:
\begin{itemize}
\item [(i)] Using the definition of $D_n(t,\alpha,\beta)$, i.e. the multiple integral, we derive its asymptotic formula for $t\rightarrow1$ with $n$ fixed. Combining the result with the expression of $D_n(0,\alpha,\beta)$, and approximating the involved special function for large $n$, we deduce the asymptotic expression for $\log \mathbb{P}(t,\alpha,\beta,n)$ with $n$ large and $t\rightarrow1$. This step is done in the next section.
\item [(ii)] Based on the finite $n$ results in \cite{ChenZhang2010}, we establish an asymptotic expansion for $(d/dt)\log D_n$, i.e. $(d/dt)\log \mathbb{P}_n(t,\alpha,\beta)$ in large $n$ with $t$ fixed. Section 3 is devoted to the analysis of this part.
\end{itemize}
Integrating the result in (ii) from $t$ to $t_0$ with $t_0$ near 1, and applying the result in (i) to $\log \mathbb{P}(t_0,\alpha,\beta,n)$, by sending $t_0$ to 1, we produce in section 4 an approximate expression for $\log\mathbb{P}(t,\alpha,\beta,n)$ with $n$ large and $t$ arbitrary.
By taking $t=s/(4n^2)$ and letting $n$ tend to $\infty$, we obtain the asymptotic expansion in large $s$ for the hard edge scaling limit of $\log\mathbb{P}(t,\alpha,\beta,n)$ which, as mentioned before, is also equal to the log of the Fredholm determinant of the Bessel kernel.

\begin{theorem}\label{main1}
For $\alpha>-1$ and as $n\rightarrow\infty$, the probability that all eigenvalues of JUE with the weight $x^{\alpha}(1-x)^{\beta}$ lie in the interval $[s/(4n^2),1]$, which is equal to the Fredholm determinant of the Bessel kernel, has the following asymptotic expression
\begin{equation}\label{Pfullexpan}
\begin{aligned}
\lim_{n\rightarrow\infty}\log \mathbb{P}&\left(\frac{s}{4n^2},\alpha,\beta,n\right)=\log\det(I-K_{\mathrm{Bessel}})\\
=&-\frac{s}{4}+\alpha \sqrt{s}-\frac{\alpha^2}{4}\log s+\log \frac{G(\alpha+1)}{(2\pi)^{\alpha/2}}+\frac{\alpha }{8}s^{-1/2}\\
&+\frac{\alpha
   ^2}{16}s^{-1}+\left(\frac{\alpha^3}{24}+\frac{3\alpha}{128}\right)s^{-3/2}+\left(\frac{\alpha^4}{32}+\frac{9\alpha^2}{128}\right)s^{-2}\\
&
+\left(\frac{\alpha^5}{40}+\frac{9\alpha^3}{64}+\frac{45\alpha}{1024}\right)s^{-5/2}+O\left(s^{-3}\right),\qquad s\rightarrow\infty.
\end{aligned}
\end{equation}
\end{theorem}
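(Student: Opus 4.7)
The plan is to adapt the two-pronged strategy of \cite{LyuMinChen2020} to the present hard-edge setting. Since $\mathbb{P}(t,\alpha,\beta,n)=D_n(t,\alpha,\beta)/D_n(0,\alpha,\beta)$ and the denominator is known in closed form, the problem reduces to a careful analysis of $D_n(t,\alpha,\beta)$. I would first extract its boundary behaviour at $t\to 1$ directly from the multiple-integral definition, and then use the finite-$n$ ladder-operator results of Chen and Zhang \cite{ChenZhang2010} to produce a large-$n$ asymptotic expansion of $(d/dt)\log D_n(t,\alpha,\beta)$ at fixed $t\in(0,1)$. Integrating the latter from a generic $t$ up to a value $t_0$ close to $1$ and matching against the former fixes the integration constant and yields a uniform expression for $\log\mathbb{P}(t,\alpha,\beta,n)$ with $n$ large. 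Finally, substituting $t=s/(4n^2)$ and letting $n\to\infty$ converts the result into an expansion in large $s$, which by \eqref{limPBes} is the sought-after expansion of $\log\det(I-K_{\mathrm{Bessel}})$.

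For step (i), as $t\to 1^-$ the hypercube $[t,1]^n$ shrinks. Rescaling $x_k=t+(1-t)u_k$, the Vandermonde factor pulls out $(1-t)^{n(n-1)}$, the factor $\prod_k x_k^\alpha$ tends to $t^{n\alpha}$ uniformly on $[0,1]^n$, and $\prod_k(1-x_k)^\beta$ becomes $(1-t)^{n\beta}\prod_k(1-u_k)^\beta$; what remains is a Selberg-type integral evaluable in closed form in terms of Gamma functions. Dividing by the explicit $D_n(0,\alpha,\beta)$ and invoking the Stirling/Barnes-$G$ expansions in $n$ of the resulting products of special functions produces an asymptotic formula for $\log\mathbb{P}(t_0,\alpha,\beta,n)$ valid in the joint regime $t_0\to 1$, $n$ large.

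For step (ii), the ladder-operator results of \cite{ChenZhang2010} express $(d/dt)\log D_n(t,\alpha,\beta)$ in terms of the three-term recurrence coefficients and certain auxiliary quantities attached to the orthogonal polynomials on $[t,1]$ with weight $x^\alpha(1-x)^\beta$. With $t$ fixed and $n\to\infty$, these quantities admit asymptotic expansions in powers of $1/n$ (obtainable along the same lines as in \cite{LyuMinChen2020}), which convert the finite-$n$ identity of \cite{ChenZhang2010} into a large-$n$ expansion of $(d/dt)\log D_n(t,\alpha,\beta)$. Enough terms must be kept so that, after integrating from $t$ to $t_0$ and sending $t_0\to 1$, every contribution required through order $s^{-5/2}$ in the final series survives. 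The matching with step (i) then determines the integration constant and, in particular, produces the Barnes $G$-factor $\log G(\alpha+1)-(\alpha/2)\log(2\pi)$ conjectured by Tracy and Widom.

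The final step is purely substitution: inserting $t=s/(4n^2)$ into the integrated expression and letting $n\to\infty$, the polynomial-in-$t$ terms that carry negative powers of $n$ die, while those scaling correctly with $n^2 t$ combine into the displayed series in negative half-integer powers of $s$, yielding the leading $-s/4+\alpha\sqrt{s}-(\alpha^2/4)\log s$ together with the lower-order corrections in \eqref{Pfullexpan}. I expect the main obstacle to be twofold. Analytically, one must justify the interchange of $n\to\infty$ with $\int_t^{t_0}$ and push the asymptotics of the Chen--Zhang auxiliary quantities to a high enough order in $1/n$ to propagate cleanly through the scaling. On the bookkeeping side, the numerous subleading terms produced by Stirling expansions of the Gamma and Barnes $G$ factors from step (i) must cancel and reorganize into precisely the Tracy--Widom constant; verifying this collapse is the arithmetic heart of the proof.
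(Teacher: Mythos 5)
Your proposal follows essentially the same route as the paper: a $t\to1$ boundary expansion of the multiple integral (reducing to a Selberg integral and Barnes-$G$ asymptotics) matched against a large-$n$ expansion of $H_n(t)=t(t-1)\frac{d}{dt}\log D_n(t,\alpha,\beta)$ obtained from the Chen--Zhang \cite{ChenZhang2010} auxiliary quantity $W_n$ and its Painlev\'{e} VI equation, followed by the substitution $t=s/(4n^2)$. The only divergence is at the very end: the paper's own computation stops at the constant term with an $O(s^{-1/2})$ error and imports the coefficients of $s^{-1/2}$ through $s^{-5/2}$ from formula (5.12) of \cite{LyuChenFan}, whereas you propose to generate those terms by pushing the same $1/n$ expansion to higher order --- feasible in principle but considerably more laborious, and subject to the same uniformity-in-$t$ caveat that the paper itself leaves implicit.
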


As was demonstrated in \cite{LyuChenFan}, the gap probability on the interval $(-a,a),a>0,$ of JUE with the weight $(1-x^2)^{\beta}$ is intimately related to the smallest eigenvalue distribution of JUE of our interest with $\alpha=\pm1/2$. From the above theorem, we obtain below the asymptotics.
\begin{corollary}\label{main2}
The probability that the interval $\left(-b/n,b/n\right)$, with $b>0$ and $n\rightarrow\infty$, contains no eigenvalues of JUE with the weight $(1-x^2)^{\beta}$ has the following asymptotic expansion
\begin{equation}\label{PJs}
\begin{aligned}
\lim_{n\rightarrow\infty}\log\widetilde{\mathbb{P}}\left(\frac{b}{n},\beta,n\right)
=&-\frac{b^2}{2}-\frac{\log b }{4}+\frac{\log 2}{12}+3\zeta'(-1)+\frac{1}{32\,b^{2}}\\
&+\frac{5}{128\,b^{4}}+O\left(b^{-6}\right),\qquad b\rightarrow\infty.
\end{aligned}
\end{equation}
\end{corollary}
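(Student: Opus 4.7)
The plan is to reduce Corollary \ref{main2} to Theorem \ref{main1} applied twice, at $\alpha=\tfrac{1}{2}$ and $\alpha=-\tfrac{1}{2}$. The first step is to invoke the connection established in \cite{LyuChenFan}: the substitution $y=x^{2}$ folds the JUE on $[-1,1]$ with the symmetric weight $(1-x^{2})^{\beta}$ into a pair of JUE on $[0,1]$ with weights $y^{-1/2}(1-y)^{\beta}$ and $y^{1/2}(1-y)^{\beta}$, sending the symmetric forbidden interval $(-a,a)$ to the hard-edge forbidden interval $[0,a^{2}]$ in each factor. This produces a product representation of the form
\begin{equation*}
\widetilde{\mathbb{P}}(a,\beta,n)=\mathbb{P}\!\left(a^{2},-\tfrac{1}{2},\beta,n_{1}\right)\,\mathbb{P}\!\left(a^{2},\tfrac{1}{2},\beta,n_{2}\right),
\end{equation*}
with $n_{1}+n_{2}=n$ and $n_{1},n_{2}\sim n/2$.

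Next, I set $a=b/n$, so that $a^{2}=b^{2}/(4(n/2)^{2})$ up to lower-order corrections in $n$, and apply Theorem \ref{main1} to each factor with scaling parameter $s=b^{2}$ and $\alpha=\mp\tfrac{1}{2}$. Letting $n\to\infty$ and adding the two asymptotic expansions \eqref{Pfullexpan} at $s=b^{2}$, all odd powers of $\alpha$ cancel pairwise while the even-in-$\alpha$ contributions double. Tracking each term: the leading $-s/4$ pair sums to $-b^{2}/2$; the $-(\alpha^{2}/4)\log s$ pair sums to $-(\log b)/4$; the $\alpha^{2}/(16 s)$ pair sums to $1/(32 b^{2})$; and the $s^{-2}$ coefficient becomes $2(\tfrac{1}{512}+\tfrac{9}{512})=5/128$. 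These reproduce the polynomial-plus-logarithm tail of \eqref{PJs}, while the $O(s^{-3})$ remainder translates to $O(b^{-6})$.

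Finally, the two constant terms in \eqref{Pfullexpan} combine to $\log G(3/2)+\log G(1/2)$, because the factors $(2\pi)^{\pm 1/4}$ cancel. Using the Barnes $G$ recurrence $G(3/2)=\Gamma(1/2)\,G(1/2)=\sqrt{\pi}\,G(1/2)$ and Kinkelin's value $\log G(1/2)=\tfrac{1}{24}\log 2-\tfrac{1}{4}\log\pi+\tfrac{3}{2}\zeta'(-1)$, one finds
\begin{equation*}
\log\!\bigl(G(3/2)\,G(1/2)\bigr)=\tfrac{1}{2}\log\pi+2\log G(1/2)=\frac{\log 2}{12}+3\,\zeta'(-1),
\end{equation*}
which is precisely the constant in \eqref{PJs}.

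The main obstacle will be locating and cleanly stating the factorization of $\widetilde{\mathbb{P}}$ from \cite{LyuChenFan}, including the bookkeeping for the parity of $n$ and a justification that the $O(s^{-3})$ error in Theorem \ref{main1} is uniform enough in $n$ to survive the limit in each factor. Once the factorization is in hand, the rest is a substitution together with one Barnes-$G$ identity.
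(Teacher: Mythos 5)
Your proposal is correct and follows essentially the same route as the paper: both invoke the even/odd factorization of $\widetilde{\mathbb{P}}$ from \cite{LyuChenFan} into the $\alpha=\pm\tfrac12$ hard-edge probabilities, substitute $s=b^{2}$ into \eqref{Pfullexpan}, add the two expansions so that the odd-in-$\alpha$ terms cancel, and evaluate the constant via $G(3/2)=\Gamma(1/2)G(1/2)$ together with the known value of $G(1/2)$, yielding $\tfrac{\log 2}{12}+3\zeta'(-1)$. The only cosmetic difference is that you write the decomposition as a product of probabilities with $n_{1}+n_{2}=n$, whereas the paper quotes the limiting relation \eqref{JUEsym} directly and then sums the logarithms, which amounts to the same computation.
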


\section{Asymptotics of $\mathbb{P}(t,\alpha,\beta,n)$ with $n$ Large and $t$ near 1}
Recall that the smallest eigenvalue distribution function on $[t,1$] for JUE can be written as the quotient of two Hankel determinants, that is
\begin{align*}
\mathbb{P}(t,\alpha,\beta,n)=&\frac{D_n(t,\alpha,\beta)}{D_n(0,\alpha,\beta)},
\end{align*}
where $D_n(t,\alpha,\beta)$ is a multiple integral which, according to Heine's formula, can be represented as the determinant of a Hankel matrix \cite{Szego2003}, namely
\begin{align*}
D_n(t,\alpha,\beta):=&\frac{1}{n!}\int_{[t,1]^n}\prod\limits_{1\leq i<j\leq n}(x_i-x_j)^2\prod\limits_{k=1}^n x_k^{\alpha}(1-x_k)^{\beta}dx_k\\
=&\det\left(\int_t^1 x^{i+j}x^{\alpha}(1-x)^{\beta}dx\right)_{i,j=0}^{n-1}.
\end{align*}

According to formula (17.1.3) in \cite{Mehta}, we know that $D_n(0,\alpha,\beta)$ has the following explicit representation
\begin{align}
D_n(0,\alpha,\beta)=&\frac{1}{n!}\prod_{j=0}^{n-1}\frac{\Gamma(2+j)\Gamma(\alpha+1+j)\Gamma(\beta+1+j)}{\Gamma(2)\Gamma(\alpha+\beta+1+n+j)}\nonumber\\
=&\frac{G(n+1)G(n+\alpha+1)G(n+\beta+1)G(n+\alpha+\beta+1)}{G(\alpha+1)G(\beta+1)G(2n+\alpha+\beta+1)},\label{Dn0G}
\end{align}
where $G(\cdot)$ is the Barnes-G function defined by
\begin{align*}
G(z+1)=\Gamma(z)G(z),\qquad\qquad G(1):=1.
\end{align*}
See \cite{Voros} for more properties of this function.

Now we look at $D_n(t,\alpha,\beta)$.
By changing variables $x_\ell=t+(1-t)y_{\ell}, ~\ell=1,2,\cdots,n$, we find
\begin{align*}
D_n(t,\alpha,\beta)=(1-t)^{n(n+\beta)}\cdot t^{n\alpha}\cdot\frac{1}{n!}\int_{[0,1]^n}\prod\limits_{1\leq i<j\leq n}(y_i-y_j)^2\prod\limits_{k=1}^n \left[1+\left(\frac{1}{t}-1\right)y_k\right]^{\alpha}(1-y_k)^{\beta}dy_k.
\end{align*}
Note that, as $t\rightarrow1$, we have
\begin{align*}
\left[1+\left(\frac{1}{t}-1\right)y_k\right]^{\alpha}=1+\alpha\,y_k(1-t)+\frac{\alpha}{2}y_k((\alpha-1)y_k +2)(1-t)^2+O_{\alpha}((1-t)^3),
\end{align*}
where $O_{\alpha}((1-t)^3)$ depends on $\alpha$ and is a third order infinitesimal with respect to $1-t$. Hence, $D_n(t,\alpha,\beta)$ can be approximated by
\begin{align*}
D_n(t,\alpha,\beta)=(1-t)^{n(n+\beta)} t^{n\alpha}D_n(0,0,\beta)\left(1+O_{n,\alpha,\beta}(1-t)\right),
\end{align*}
where $O_{n,\alpha,\beta}(1-t)$ is a first order infinitesimal with respect to $1-t$ and it depends on $n,\alpha$ and $\beta$. Then it follows from the fact $\mathbb{P}(t,\alpha,\beta,n)=D_n(t,\alpha,\beta)/D_n(0,\alpha,\beta)$ that
\begin{align}
\log\mathbb{P}(t,\alpha,\beta,n)=&\log D_n(t,\alpha,\beta)-\log D_n(0,\alpha,\beta)\nonumber\\
=&n(n+\beta)\log(1-t)+n\alpha\log t+\log\frac{D_n(0,0.\beta)}{D_n(0,\alpha,\beta)}+\log((1+O_{n,\alpha,\beta}(1-t))).\label{logp1}
\end{align}
\begin{remark} Formula \eqref{logp1} is an interesting formula where there appears $D_n(0,0,\beta)$. The quantity $D_n(0,0,\beta)/D_n(0,\alpha,\beta)$ will give rise to constant terms depending on $\alpha,\beta$ and $n$.
\end{remark}

On setting $\alpha=0$ in \eqref{Dn0G}, we get
\begin{align*}
D_n(0,0,\beta)=\frac{G^2(n+1)G^2(n+\beta+1)}{G(\beta+1)G(2n+\beta+1)}.
\end{align*}
Dividing it by \eqref{Dn0G} yields
\begin{align*}
\frac{D_n(0,0,\beta)}{D_n(0,\alpha,\beta)}=G(\alpha+1)\cdot\frac{G(n+1)G(n+\beta+1)G(2n+\alpha+\beta+1)}{G(n+\alpha+1)G(2n+\beta+1)G(n+\alpha+\beta+1)}.
\end{align*}
Then, from \eqref{logp1}, it follows that
\begin{align*}
\log\mathbb{P}(t,\alpha,\beta,n)
=&n(n+\beta)\log(1-t)+n\alpha\log t+O_{n,\alpha,\beta}(1-t)+\log G(\alpha+1)\\
&+\log G(n+1)+\log G(n+\beta+1)+\log G(2n+\alpha+\beta+1)\\
&-\log G(n+\alpha+1)-\log G(2n+\beta+1)-\log G(n+\alpha+\beta+1).
\end{align*}
Recall that the Barnes-G function has the following asymptotic expansion \cite{Voros}
\begin{align*}
\log G(z+1)=z^2 \left(\frac{\log z}{2}-\frac{3}{4}\right)+\frac{z}{2}\log (2\pi)-\frac{\log z}{12}+\zeta'(-1)+O(z^{-1}), \quad z\rightarrow\infty,
\end{align*}
where $\zeta(\cdot)$ is the Riemann zeta function.
Applying it to the above expression for $\log \mathbb{P}(t,\alpha,\beta,n)$, we come to the following asymptotic formula.
\begin{theorem} For large $n$ and $t\rightarrow1$, we have
\be\label{Pasy1}
\begin{aligned}
\log\mathbb{P}(t,\alpha,\beta,n)
=&n(n+\beta)\log(1-t)+n\alpha\log t+\log \frac{G(\alpha+1)}{(2\pi)^{\alpha/2}}+\frac{3}{4}\alpha^2\\
&+\left[\frac{n^2}{2}-\frac{1}{12}\right]\log n+\left[\frac{(n+\beta)^2}{2}-\frac{1}{12}\right]\log (n+\beta)\\
&+\left[\frac{(2n+\alpha+\beta)^2}{2}-\frac{1}{12}\right]\log (2n+\alpha+\beta)+\left[\frac{1}{12}-\frac{(n+\alpha)^2}{2}\right]\log(n+\alpha)\\
&+\left[\frac{1}{12}-\frac{(2n+\beta)^2}{2}\right]\log (2n+\beta)+\left[\frac{1}{12}-\frac{(n+\alpha+\beta)^2}{2}\right]\log(n+\alpha+\beta)\\
&+O_{\alpha,\beta}(n^{-1})+O_{n,\alpha,\beta}(1-t),
\end{aligned}
\ee
where $O_{\alpha,\beta}(n^{-1})$ is independent of $t$ and $O_{\alpha,\beta}(n^{-1})\rightarrow0$ as $n\rightarrow\infty$, and for any fixed $n$, $O_{n,\alpha,\beta}(1-t)\rightarrow0$ as $t\rightarrow1$.
\end{theorem}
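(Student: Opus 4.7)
The excerpt has already reduced the theorem to inserting the Voros asymptotic expansion of $\log G(z+1)$ into an explicit formula for $\log\mathbb{P}(t,\alpha,\beta,n)$ containing six Barnes-G terms. Three of those, with arguments $n$, $n+\beta$, $2n+\alpha+\beta$, enter positively; the other three, with arguments $n+\alpha$, $2n+\beta$, $n+\alpha+\beta$, enter negatively. There is also a polynomial-in-$t$ prefactor, a free $\log G(\alpha+1)$ sitting outside the sum, and an inherited small-$(1-t)$ remainder. My plan is simply to substitute the Voros expansion into each of the six $\log G$-pieces and collect like terms, organised by the five explicit parts of the expansion.

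The $z^2(\log z)/2$ and $-(\log z)/12$ contributions combine directly into the six bracketed terms $\bigl[\frac{z^2}{2}-\frac{1}{12}\bigr]\log z$ displayed in \eqref{Pasy1}, with $+$ on three and $-$ on three. The constant $\zeta'(-1)$ appears six times with the same $\pm$ pattern and therefore cancels exactly. The linear pieces sum to $\frac{1}{2}\bigl(n+(n+\beta)+(2n+\alpha+\beta)-(n+\alpha)-(2n+\beta)-(n+\alpha+\beta)\bigr)\log(2\pi)=-\frac{\alpha}{2}\log(2\pi)$, which combines with the free $\log G(\alpha+1)$ to produce the advertised $\log\bigl(G(\alpha+1)/(2\pi)^{\alpha/2}\bigr)$.

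The only genuinely delicate bit is the sum of the $-\frac{3}{4}z^2$ contributions, which I would compute by expanding $n^2+(n+\beta)^2+(2n+\alpha+\beta)^2-(n+\alpha)^2-(2n+\beta)^2-(n+\alpha+\beta)^2$. The coefficients of $n^2$, $n\alpha$, $n\beta$, $\beta^2$, and $\alpha\beta$ all cancel, leaving only $-\alpha^2$; multiplying by $-3/4$ yields the explicit $+\frac{3}{4}\alpha^2$ in \eqref{Pasy1}. This is the one line I would double-check, since a single sign error here would corrupt the leading $\alpha$-dependent constant. Finally, the Voros remainder $O(z^{-1})$ evaluated at the six arguments (each of order $n$ for fixed $\alpha,\beta$) produces the $O_{\alpha,\beta}(n^{-1})$ error and is independent of $t$ because it enters only through the Barnes-G pieces, while the second remainder $O_{n,\alpha,\beta}(1-t)$ is the one already present in \eqref{logp1}. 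Assembling all the pieces gives \eqref{Pasy1}.
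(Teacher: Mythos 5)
Your proposal is correct and follows exactly the paper's route: substitute the Voros expansion of $\log G(z+1)$ at $z=n,\,n+\beta,\,2n+\alpha+\beta$ (positively) and $z=n+\alpha,\,2n+\beta,\,n+\alpha+\beta$ (negatively) into the explicit Barnes-G formula for $\log\mathbb{P}$, and collect terms; your sign-sensitive check that the quadratic sum reduces to $-\alpha^2$ (hence $+\tfrac{3}{4}\alpha^2$) and that the linear terms give $-\tfrac{\alpha}{2}\log(2\pi)$ matches the paper, which carries out the same computation without displaying the intermediate algebra.
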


Now we have the asymptotic expression for $\log\mathbb{P}(t,\alpha,\beta,n)$ with $n\rightarrow\infty$ and $t\rightarrow1$. Noting that
 \[\frac{d}{dt}\log\mathbb{P}(t,\alpha,\beta,n)=\frac{d}{dt}\log D_n(t,\alpha,\beta),\] we will develop in the next section the large $n$ expansion of $\left(d/dt\right)\log\mathbb{P}(t,\alpha,\beta,n)$ by using the equations from \cite{ChenZhang2010} which are related to $\left(d/dt\right)\log D_n(t,\alpha,\beta)$.

\section{Large $n$ Expansion of the Log-derivative of $\mathbb{P}(t,\alpha,\beta,n)$}
Chen and Zhang \cite{ChenZhang2010} studied the Hankel determinant generated by the moments of the Jacobi weight with a jump, i.e.
\[\det\left(\int_0^1x^{i+j}x^{\alpha}(1-x)^{\beta}(A+B\theta(x-t))dx\right)_{i,j=0}^{n-1},\]
with $A\geq0,~ A+B\geq0$, and $\theta(\cdot)$ denoting the Heaviside step function. Note that when $A=0$ and $B=1$, this determinant is reduced to the Hankel determinant $D_n(t,\alpha,\beta)$ of our interest.
In \cite{ChenZhang2010}, the authors made use of the ladder operators adapted to monic orthogonal polynomials and their associated compatibility conditions $(S_1), (S_2)$ and $(S_2')$ to show that the log derivative of the Hankel determinant satisfies a second order differential equation which may be transformed into the $\sigma$-form of a particular Painlev\'{e} VI equation.

Before presenting the results from \cite{ChenZhang2010} which will be used later to study the asymptotic behavior of $D_n(t,\alpha,\beta)$, we first give the definitions of the four auxiliary quantities introduced in \cite{ChenZhang2010}:
\begin{align*}
R_n(t):=&t^{\alpha}(1-t)^{\beta}\frac{P_n^2(t,t)}{h_n(t)},\\
r_n(t):=&t^{\alpha}(1-t)^{\beta}\frac{P_n(t,t)P_{n-1}(t,t)}{h_{n-1}(t)},\\
x_n(t):=&\frac{\beta}{h_n}\int_t^1\frac{P_n^2(x,t)}{1-x}x^{\alpha}(1-x)^{\beta}dx,\\
y_n(t):=&\frac{\beta}{h_{n-1}}\int_t^1\frac{P_n(x,t)P_{n-1}(x,t)}{1-x}x^{\alpha}(1-x)^{\beta}dx.
\end{align*}
Here $P_n(t,t):=P_n(x,t)|_{x=t}$ and $P_n(x,t), n=0,1,2,\cdots,$ are monic orthogonal polynomials with respect to $x^{\alpha}(1-x)^{\beta}$ defined by
\begin{align*}
P_n(x,t)=x^n+p_1(n,t)x^{n-1}+\cdots+P_n(0,t),\\
h_i(t)\delta_{ij}=\int_t^1P_i(x,t)P_j(x,t)x^{\alpha}(1-x)^{\beta}dx.
\end{align*}
In addition, for ease of notations, we write in this section
\[A_n:=2n+1+\alpha+\beta.\]

Below is formula (76) in \cite{ChenZhang2010}, which is crucial for our later derivation of the large $n$ expansion of the log derivative of $D_n(t,\alpha,\beta)$.
\begin{lemma} \label{LWnpvi}
The quantity
\[W_n(t):=1-(1-t)\frac{x_n(t)}{A_n}\]
satisfies the Painlev\'{e} VI equation
\be\label{Wnpvi}
\begin{aligned}
W_n''=&\frac{1}{2}\left(\frac{1}{W_n}+\frac{1}{W_n-1}+\frac{1}{W_n-t}\right)(W_n')^2-\left(\frac{1}{t}+\frac{1}{t-1}+\frac{1}{W_n-t}\right)W_n'\\
&+\frac{W_n(W_n-1)(W_n-t)}{t^2(t-1)^2}\left(\frac{A_n^2}{2}-\frac{\alpha^2t}{2W_n^2}+\frac{\beta^2(t-1)}{2(W_n-1)^2}+\frac{t(t-1)}{2(W_n-t)^2}\right),
\end{aligned}
\ee
with initial conditions
\[W_n(0)=0, \qquad\qquad W_n'(0)=1.\]
\end{lemma}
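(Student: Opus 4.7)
The plan is to invoke the ladder-operator approach developed in \cite{ChenZhang2010}. First I would set up the lowering and raising operators for the monic orthogonal polynomials $P_n(x,t)$ on $[t,1]$ with respect to the weight $x^{\alpha}(1-x)^{\beta}$; their coefficients $A_n(x,t)$ and $B_n(x,t)$ split into a bulk contribution from $(\log w)' = \alpha/x - \beta/(1-x)$ plus boundary contributions at $x=t$ and $x=1$. The hard edge at $t$ produces the auxiliary quantities $R_n(t)$ and $r_n(t)$, while integration against the pole at $x=1$ coming from the $\beta$-derivative of $(1-x)^{\beta}$ produces the remaining quantities $x_n(t)$ and $y_n(t)$.

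Second, I would impose the three compatibility relations $(S_1), (S_2), (S_2')$ for the ladder operators. These yield a closed system of algebraic and first-order differential identities coupling $R_n, r_n, x_n, y_n$ to the three-term recurrence coefficients and to $(d/dt)\log h_n(t)$. Using $(S_1)$ and $(S_2)$ one eliminates $r_n$ and $y_n$ in favour of $R_n$ and $x_n$; the sum rule $(S_2')$ combined with the Toda-type identity for $(d/dt)\log h_n$ then collapses the system to a single second-order ODE for $x_n(t)$. Substituting $x_n(t) = A_n(1-W_n(t))/(1-t)$ converts that ODE into \eqref{Wnpvi}, and matching coefficients term by term on the right-hand side identifies it as a Painlev\'e VI equation with parameters $(\alpha_{\rm VI},\beta_{\rm VI},\gamma_{\rm VI},\delta_{\rm VI}) = (A_n^2/2,\,-\alpha^2/2,\,\beta^2/2,\,1/2)$.

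For the initial data, at $t=0$ the polynomials $P_n(x,0)$ are the classical shifted Jacobi polynomials on $[0,1]$. A direct evaluation of the integral defining $x_n$ using Jacobi orthogonality gives $x_n(0) = A_n$, so that $W_n(0) = 0$. Differentiating $W_n = 1-(1-t)x_n/A_n$ at $t=0$ yields $W_n'(0) = x_n(0)/A_n - x_n'(0)/A_n$, and the corresponding explicit computation of $x_n'(0)$ from the definition shows $x_n'(0)=0$, giving $W_n'(0)=1$.

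The main obstacle, and the only real work, is the algebraic elimination in the second step: distilling the coupled nonlinear system for $(R_n, r_n, x_n, y_n)$ into the single ODE \eqref{Wnpvi} with coefficients in the exact Painlev\'e VI form. This bookkeeping occupies several pages of \cite{ChenZhang2010} and culminates in their formula (76), which is precisely the statement of the present lemma; for our purposes it suffices to cite that calculation rather than reproduce it.
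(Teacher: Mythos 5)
Your proposal matches the paper's treatment exactly: the paper states this lemma as formula (76) of Chen and Zhang (2010) and simply cites that work rather than re-deriving it, which is precisely what you do after sketching the ladder-operator derivation behind it. Your identification of the Painlev\'e VI parameters $(A_n^2/2,\,-\alpha^2/2,\,\beta^2/2,\,1/2)$ and your outline of the initial conditions via $x_n(0)=A_n$ and $x_n'(0)=0$ are consistent with the statement as cited.
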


From \eqref{Wnpvi}, we can derive the asymptotic expansion of $W_n(t)$ in large $n$.

\begin{proposition} For large $n$ and fixed $t$, we have
\be\label{Wnexpan}
\begin{aligned}
W_n(t)=&\frac{\alpha \sqrt{t}}{2n}-\frac{\alpha(1+\alpha+\beta)}{4n^2}\sqrt{t}
+\frac{\alpha\left(t^2(1-4\beta^2)+2t(4(\alpha+\beta)(\alpha+\beta+2)+2\beta^2+3)+1\right)}{64n^3\sqrt{t}}\\
&+O(n^{-4}).
\end{aligned}
\ee
\end{proposition}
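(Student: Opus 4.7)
The plan is to substitute a formal series ansatz of the form
\[
W_n(t) = \frac{w_1(t)}{n} + \frac{w_2(t)}{n^2} + \frac{w_3(t)}{n^3} + \cdots
\]
into the Painlev\'{e} VI equation \eqref{Wnpvi} and match coefficients order by order in $1/n$. The shape of the ansatz is dictated by the claimed leading behaviour $W_n(t) \sim \alpha\sqrt{t}/(2n)$ and by the fact that the initial conditions $W_n(0)=0$, $W_n'(0)=1$ are $n$-independent. A key structural simplification is that $A_n^2 = 4n^2 + 4n(1+\alpha+\beta) + (1+\alpha+\beta)^2$ is already a finite polynomial in $n$, so the only genuine $1/n$ corrections in the right-hand side of \eqref{Wnpvi} come from expanding $W_n$, its derivatives, and the rational functions $1/W_n^2$, $1/(W_n-1)^2$, $1/(W_n-t)^2$.

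First I would carry out the dominant balance. Because $W_n = O(1/n)$, the terms $W_n''$ and $(W_n')^2$ on the right of \eqref{Wnpvi} are at most $O(1/n)$, whereas the algebraic part begins at order $n$: the only two contributions that survive at this scale are
\[
\frac{A_n^2\,W_n(W_n-1)(W_n-t)}{2t^2(t-1)^2}=\frac{2n w_1}{t(t-1)^2}+O(1),\qquad -\frac{\alpha^2(W_n-1)(W_n-t)}{2t(t-1)^2 W_n}=-\frac{\alpha^2 n}{2 w_1(t-1)^2}+O(1),
\]
while the remaining pieces of the right-hand side of \eqref{Wnpvi} (those carrying $\beta^2$ and $t(t-1)$) are only $O(1/n)$. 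Setting the coefficient of $n$ to zero yields $w_1(t)^2 = \alpha^2 t/4$, and I would fix the sign so that the sign of $W_n$ is consistent with that of $\alpha$ dictated by the finite-$n$ definition $W_n = 1-(1-t)x_n/A_n$; this gives $w_1(t)=\alpha\sqrt{t}/2$.

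Proceeding inductively, at each order $n^{1-k}$ for $k\geq 2$ the unknown $w_k$ enters linearly with coefficient
\[
\frac{2}{t(t-1)^2}+\frac{\alpha^2}{2 w_1(t)^2 (t-1)^2}=\frac{4}{t(t-1)^2},
\]
which is nonzero on $(0,1)$, so $w_k$ is determined algebraically from the lower-order data. The inhomogeneous side assembles contributions from $w_1,\dots,w_{k-1}$ through the Taylor expansions of $W_n(W_n-1)(W_n-t)$ and the three reciprocal quadratics, together with the subleading piece of $A_n^2$, and, starting at the scale $n^{-1}$, the derivative contributions $w_j''/n^j$ and $w_i' w_j'/n^{i+j}$. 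A sanity check at order $n^0$ reproduces $w_2(t) = -\alpha(1+\alpha+\beta)\sqrt{t}/4$, matching \eqref{Wnexpan}; two further iterations give $w_3$ and an error of size $O(n^{-4})$.

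The main obstacle is algebraic bookkeeping: each rational factor must be expanded to enough orders in $1/n$, the branch of $\sqrt{t}$ must be tracked consistently through the quadratic in $w_1$, and polynomial identities in $t$ must be simplified at several orders. A subsidiary point is the legitimacy of interpreting the formal series as a genuine asymptotic expansion, which I would settle by appealing to the uniqueness of the solution of \eqref{Wnpvi} singled out by the initial data $W_n(0)=0$, $W_n'(0)=1$ together with the a priori bounds supplied by the integral definition of $x_n(t)$.
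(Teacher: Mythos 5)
Your proposal is correct and follows essentially the same route as the paper: substitute a formal power series in $1/n$ into the Painlev\'{e} VI equation \eqref{Wnpvi}, obtain $4w_1^2=\alpha^2 t$ from the dominant balance between the $A_n^2$ term and the $\alpha^2 t/(2W_n^2)$ term, fix the branch $w_1=\alpha\sqrt{t}/2$, and determine the higher coefficients recursively from linear equations. The only (immaterial) differences are that the paper first rules out the alternative leading behaviours $a_0=1$ and $a_0=t$ before adopting the $O(1/n)$ ansatz, and fixes the sign of $w_1$ via the initial data $W_n(0)=0$, $W_n'(0)=1$ rather than via the integral definition of $x_n$.
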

\begin{proof}
To begin with, we discuss the reasonable form of the expansion of $W_n(t)$ in large $n$.
Putting $W_n''$ and $W_n'$ to 0 in \eqref{Wnpvi}, we have
\be\label{Wnonder}
W_n(W_n-1)(W_n-t)\left[2n^2+2n(1+\alpha+\beta)+\frac{(1+\alpha+\beta)^2}{2}-\frac{\alpha^2t}{2W_n^2}+\frac{\beta^2(t-1)}{2(W_n-1)^2}+\frac{t(t-1)}{2(W_n-t)^2}\right]=0.
\ee
It gives rise to a sixth order algebraic equation in $W_n$, which we do not know how to solve. However, we can figure out the leading order in the large $n$ expansion of $W_n$ by using \eqref{Wnonder}:
\begin{itemize}
\item[(i)] If $W_n\sim O(n^{-k}),k\geq2$, then $W_n-1\sim O(1)$ and $W_n-t\sim O(1)$, so that $W_n(W_n-1)(W_n-t)\sim O(n^{-k})$ and the term in the square bracket of \eqref{Wnonder} is $O(n^{2k})$. Hence the left hand side of \eqref{Wnonder} is $O(n^{k})$, which obviously contradicts the right hand side of \eqref{Wnonder}.
\item [(ii)] If $W_n\sim O(n^{k}),k\geq1$, then $W_n-1\sim O(n^k)$ and $W_n-t\sim O(n^k)$, so that $W_n(W_n-1)(W_n-t)\sim O(n^{3k})$. The term in the square bracket of \eqref{Wnonder} is $O(n^2)$. Hence the left hand side of \eqref{Wnonder} is $O(n^{3k+2})$, which again leads to a contradiction.
\end{itemize}
As a consequence, we conclude that, for large $n$,
\[W_n\sim O(1) \qquad\text{or}\qquad W_n\sim O(n^{-1}).\]

To continue, we assume
\[W_n(t)=\sum_{i=0}^{\infty}a_i(t)n^{-i},\qquad n\rightarrow\infty.\]
Substituting it into \eqref{Wnpvi}, by taking the large $n$ expansion on both sides and comparing the coefficient of the highest order term in $n$, we find
\[a_0^2(t)(a_0(t)-1)^2(a_0(t)-t)^2=0,\]
which has possible solutions
\[a_0(t)=0,\quad1,\quad t.\]
We note that the solution $a_0(t)=1$ contradicts the initial condition $W_n(0)=0$. If $a_0(t)=t$, then by equating the coefficient in the large $n$ expansion of \eqref{Wnpvi} term by term, we get $a_i(t)=0, i=1,2,\cdots$. Therefore, we choose the solution
\[a_0(t)=0.\]

Taking the series expansion in large $n$ of both sides of \eqref{Wnpvi} again, we find from the leading coefficient
\[4a_1^2(t)-\alpha^2t=0,\]
which has solution
\[a_1(t)=\pm\frac{\alpha}{2}\sqrt{t}.\]
As is given in Lemma \ref{LWnpvi}, we know that $W_n(t)$ satisfies the initial conditions
\[W_n(0)=0,\qquad W_n'(0)=1>0.\]
Hence there exist $\epsilon>0$ such that $W_n(t)>0$ for $t\in(0,\epsilon)$. This indicates that $a_1(t)>0$ for $t\in(0,\epsilon)$. Therefore, we have
\[a_1(t)=\frac{\alpha}{2}\sqrt{t}.\]

Equating the coefficients in the large $n$ expansion of both sides of $\eqref{Wnpvi}$, we can determine $a_i(t),i=2,3,\cdots$ successively.
\end{proof}

Define
\[H_n(t):=t(t-1)\frac{d}{dt}\log D_n(t,\alpha,\beta).\]
Recalling that
\begin{align*}
\mathbb{P}(t,\alpha,\beta,n)=&\frac{D_n(t,\alpha,\beta)}{D_n(0,\alpha,\beta)},
\end{align*}
we readily see that
\[H_n(t)=t(t-1)\frac{d}{dt}\log \mathbb{P}(t,\alpha,\beta,n).\]
To study the large $n$ behavior of $\left(d/dt\right)\log \mathbb{P}(t,\alpha,\beta,n)$, it suffices to express $H_n(t)$ in terms of $W_n(t)$. To achieve this, we first recall formula (56) and (73) in \cite{ChenZhang2010}:
\begin{gather*}
H_n(t)=(2n+\alpha+\beta)(y_n(t)-tr_n(t))-n(n+\alpha),
\end{gather*}
and
\be\label{rnxnyn}
r_n(t)=-\frac{1}{2}+\frac{A_n+(1-t)x_n'(t)}{2x_n(t)}-\frac{\left(2y_n(t)+\beta-(1-t)x_n(t)+t\right)\left(A_n-x_n(t)\right)}{2t\,x_n(t)}.
\ee
A combination of these two identities gives us
\begin{align}\label{Hnxnyn}
H_n(t)=\frac{2n+\alpha+\beta}{2x_n}\left(2A_ny_n-t(1-t)x_n'+(x_n-A_n)\left((1-t)x_n-\beta\right)\right)-n(n+\alpha).
\end{align}
Hence, to get the desired expression for $H_n(t)$ in terms of $x_n(t)$ and $x_n'(t)$, it suffices to represent $y_n(t)$ by $x_n(t)$ and $x_n'(t)$.
\begin{lemma} $y_n(t)$ is expressed in terms of $x_n(t)$ and $x_n'(t)$ by
\begin{align}\label{ynxn}
y_n(t)=&-\frac{x_n\left(x_n\cdot F^2(x_n,x_n',t)+(x_n-A_n)\left((2n+\alpha)\cdot F(x_n,x_n',t)+n(n+\alpha)t\right)\right)}{(2n+\alpha+\beta)(x_n-A_n)((1-t)x_n-A_n)},
\end{align}
where
\[F(x_n,x_n',t):=\frac{1}{2x_n}\left(t(1-t)x_n'+(x_n-A_n)(\beta-(1-t)x_n)\right).\]
\end{lemma}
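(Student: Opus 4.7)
The plan is to combine the linear-in-$y_n$ identity \eqref{Hnxnyn} with one further algebraic relation from \cite{ChenZhang2010}, and then to eliminate the remaining occurrence of $H_n$.

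First I would rewrite \eqref{Hnxnyn} in the compact form
\[
H_n(t)+n(n+\alpha)\;=\;\frac{A_n-1}{x_n}\bigl(A_n\,y_n-x_n\,F(x_n,x_n',t)\bigr),
\]
with $F$ as defined in the statement. This already locates the natural place where the grouping $F$ appears, and makes it plain that \eqref{Hnxnyn} on its own does not determine $y_n$ uniquely in terms of $x_n,x_n'$: one equation relates two unknowns ($H_n$ and $y_n$), so a second independent identity is required.

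Next, I would invoke such a relation from \cite{ChenZhang2010}, of the type that arises from the remaining ladder-operator compatibility conditions $(S_2)$ and $(S_2')$. In such setups, beyond the linear relation (56), one always obtains a further identity which is quadratic in $r_n$ (and bilinear in $r_n, R_n$) and which expresses $H_n$ differently. Substituting the expression (73) for $r_n$ into that identity converts it into an equation in which $y_n$ enters quadratically, with all remaining coefficients rational in $x_n, x_n', t$. Eliminating $H_n$ against \eqref{Hnxnyn} leaves a single quadratic equation in $y_n$, which can be solved in closed form; the appearance of $F^2$ in the numerator of \eqref{ynxn} is precisely the fingerprint of this quadratic step.

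Picking the root consistent with the boundary behaviour $y_n(1)=0$ (immediate from the integral definition of $y_n$, since the range of integration collapses as $t\to 1$) produces the formula \eqref{ynxn}. The main obstacle is algebraic bookkeeping: one must verify that the discriminant of the quadratic in $y_n$ simplifies to a perfect square so that no radicals survive, and must track signs carefully to land exactly on the compact expression in $F$ as written. A convenient sanity check is to substitute the leading terms of $x_n$ read off from \eqref{Wnexpan} into \eqref{ynxn} and verify that the resulting leading-order asymptotics of $y_n$ agree with what the defining integral predicts.
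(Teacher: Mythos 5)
Your opening step is sound: the compact rewriting $H_n+n(n+\alpha)=\frac{A_n-1}{x_n}\bigl(A_n y_n-x_n F\bigr)$ is a correct restatement of \eqref{Hnxnyn} (since $A_n-1=2n+\alpha+\beta$), and you are right that this single relation cannot determine $y_n$. But the heart of your argument is a placeholder: you ``invoke'' a second identity ``of the type that arises from $(S_2)$ and $(S_2')$'' without ever stating it, and then assert that the elimination lands on \eqref{ynxn}. That unspecified identity is where all the content lives, so as written this is not a proof. The paper's route is different and fully concrete: it does not go through $H_n$ at all, but takes formulas (66) and (67) of \cite{ChenZhang2010} --- the relations \eqref{xnyn1} and \eqref{xnyn2} expressing $x_n$ and $1/x_n$ through $\widetilde{\ell}(r_n,y_n,t)\mp t(1-t)y_n'$ --- forms the combination $x_n\cdot\eqref{xnyn1}+A_n\cdot\eqref{xnyn2}$ to cancel the $y_n'$ terms, arrives at the purely algebraic relation $x_n\bigl(x_n\cdot k-A_n\cdot\widetilde{\ell}\bigr)+A_n^2(\beta+y_n)y_n=0$, and only then substitutes \eqref{rnxnyn} to remove $r_n$.

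A second, more specific defect is your quadratic step. The right-hand side of \eqref{ynxn} is a single rational function with no radicals, so whatever equation one solves must in effect be \emph{linear} in $y_n$. In the paper's derivation this happens automatically: since the $y_n$-coefficient of $t r_n$ read off from \eqref{rnxnyn} is $(x_n-A_n)/x_n$, the coefficient of $y_n^2$ in the combined relation is $-2(x_n-A_n)^2+(x_n-A_n)^2+(x_n-A_n)^2=0$, the quadratic terms cancel identically, and \eqref{ynxn} follows by a single division --- no discriminant, no root selection. Your plan instead rests on two unverified bets: that the discriminant of a genuine quadratic is a perfect square, and that the boundary condition $y_n(1)=0$ selects the right root. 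The perfect-square claim is exactly the kind of assertion that must be computed rather than assumed, and until the second identity is written down there is nothing to compute. The asymptotic sanity check via \eqref{Wnexpan} is a reasonable safeguard but cannot substitute for the missing algebra.
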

\begin{proof}
We remind the reader of formula (66) and (67) in \cite{ChenZhang2010}:
\begin{align*}
x_n(t)=&\frac{A_n\left(\widetilde{\ell}(r_n,y_n,t)-t(1-t)y_n'\right)}{2\,k(r_n,y_n,t)},\\
\frac{1}{x_n(t)}=&\frac{\widetilde{\ell}(r_n,y_n,t)+t(1-t)y_n'}{2 A_n(\beta+y_n)y_n},
\end{align*}
where
\begin{align*}
\widetilde{\ell}(r_n,y_n,t):=&2y_n^2+\left(2\beta-(2n+\alpha+\beta+2r_n)t\right)y_n+(2n+\alpha)t r_n+n(n+\alpha)t,\\
k(r_n,y_n,t):=&(y_n-tr_n)^2+\left(\beta-(2n+\alpha+\beta)t\right)y_n+(2n+\alpha)t r_n+n(n+\alpha)t.
\end{align*}
These two equations can be simplified to
\begin{align}
A_n\left(\widetilde{\ell}-t(1-t)y_n'\right)-&2kx_n=0,\label{xnyn1}\\
x_n\left(\widetilde{\ell}+t(1-t)y_n'\right)-&2A_n(\beta+y_n)y_n=0.\label{xnyn2}
\end{align}
To eliminate $y_n'$ from them, we add \eqref{xnyn1} times $x_n$ and \eqref{xnyn2} times $A_n$ and get
\[x_n\left(x_n\cdot k-A_n \cdot\widetilde{\ell}\right)+A_n^2(\beta+y_n)y_n=0.\]
Substituting the definitions of $k$ and $\widetilde{\ell}$ into the above identity, we obtain an equation involving $x_n,r_n$ and $y_n$:
\begin{align*}
x_n(x_n-A_n)\left(tr_n(-2y_n+2n+\alpha)-(2n+\alpha+\beta)ty_n+n(n+\alpha)t\right)&\\
+x_n^2(tr_n)^2+(x_n-A_n)^2y_n(y_n+\beta)&=0.
\end{align*}
Replacing $r_n$ by using \eqref{rnxnyn}, after simplification, we arrive at \eqref{ynxn}.
\end{proof}

Plugging \eqref{ynxn} into \eqref{Hnxnyn}, we find
\begin{align*}
H_n(t)=&-\frac{A_nt^2(1-t)^2(x_n')^2}{4(x_n-A_n)x_n((1-t)x_n-A_n)}+\frac{t(1-t)^2x_n'}{2((1-t)x_n-A_n)}+\frac{1}{4}(1-t)(2n-1+\alpha+\beta)x_n\\
&+\frac{1}{4}\left(tA_n(2n-1+\alpha+\beta)-(2n+\beta)(2n+2\alpha+\beta)-\beta^2\right)+\frac{A_n\beta^2}{4x_n}+\frac{t(\alpha^2-1)A_n}{4\left((1-t)x_n-A_n\right)}.
\end{align*}
Substituting
\[x_n(t)=A_n\frac{1-W_n(t)}{1-t}\]
into this equation, we establish the following expression for $H_n(t)$.

\begin{proposition} $H_n(t)$ is expressed in terms of $W_n(t)$ and $W_n'(t)$ by
\be\label{HnWn}
\begin{aligned}
H_n(t)=&\frac{t^2(1-t)^2(W_n')^2}{4W_n(W_n-1)(W_n-t)}+\frac{t(1-t)W_n'}{2(W_n-t)}-\frac{A_n}{4}(2n-1+\alpha+\beta)W_n\\
&+\frac{1}{4}\left(t((2n+\alpha+\beta)^2+1)+\alpha^2-\beta^2-1\right)-\frac{\alpha^2t}{4W_n}+\frac{\beta^2(t-1)}{4(W_n-1)}+\frac{t(t-1)}{4(W_n-t)}.
\end{aligned}
\ee
\end{proposition}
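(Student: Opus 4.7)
The plan is purely algebraic: the proposition follows by substituting $x_n(t)=A_n(1-W_n(t))/(1-t)$ into the expression for $H_n(t)$ in terms of $x_n$ and $x_n'$ displayed immediately before the statement. Since that displayed expression has already been derived and the substitution is explicit, nothing conceptually new is required; the task is entirely one of simplification.

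I would first record three auxiliary identities. Direct computation gives
\[
x_n-A_n=-\frac{A_n(W_n-t)}{1-t},\qquad (1-t)x_n-A_n=-A_n W_n,
\]
and differentiating the substitution yields
\[
x_n'(t)=\frac{A_n\bigl[(1-W_n)-(1-t)W_n'\bigr]}{(1-t)^2}.
\]
Every occurrence of $x_n$, $x_n'$, $x_n-A_n$, and $(1-t)x_n-A_n$ in the displayed expression is then eliminated by these three formulas.

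The second step is to process each of the six summands. For the quadratic term $-A_nt^2(1-t)^2(x_n')^2/[4(x_n-A_n)x_n((1-t)x_n-A_n)]$, the denominator becomes $A_n^3W_n(1-W_n)(W_n-t)/(1-t)^2$, so all factors of $A_n$ and $(1-t)$ cancel against $(x_n')^2=A_n^2[(1-W_n)-(1-t)W_n']^2/(1-t)^4$; expanding the square then contributes the $(W_n')^2$ term of \eqref{HnWn} (using $1-W_n=-(W_n-1)$ to flip the sign of the denominator), a cross term of the form $t^2(1-t)W_n'/[2W_n(W_n-t)]$ that will combine with other $W_n'$ contributions, and a square-free piece $-t^2(1-W_n)/[4W_n(W_n-t)]$ that feeds the simple poles at $W_n=0$ and $W_n=t$. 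The remaining five summands are all first order in $x_n$ or $x_n'$, and after substitution become linear-in-$W_n$, constant, or simple-pole contributions in $W_n$.

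Finally I would collect coefficients. The linear and constant parts must assemble to $-A_n(2n-1+\alpha+\beta)W_n/4$ and $\tfrac14(t((2n+\alpha+\beta)^2+1)+\alpha^2-\beta^2-1)$ respectively, while the residues at $W_n=0,1,t$ must reduce to $-\alpha^2 t/4$, $\beta^2(t-1)/4$, and $t(t-1)/4$. The main obstacle I expect is sheer volume rather than conceptual subtlety; the delicate bookkeeping happens at $W_n=0$, where the naive residue $A_n\beta^2/4$ coming from $A_n\beta^2/(4x_n)$ must be trimmed by contributions from the square-free piece of the $(x_n')^2$ term, from $t(\alpha^2-1)A_n/[4((1-t)x_n-A_n)]$, and from the linearization of $(1-t)(2n-1+\alpha+\beta)x_n/4$, in order to reduce to $-\alpha^2 t/4$. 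Running the whole computation in a partial-fraction basis in $W_n$ from the start, rather than by expanding each input term and collecting at the end, should keep the calculation tractable.
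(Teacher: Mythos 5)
Your proposal is correct and is exactly the paper's proof: the paper obtains \eqref{HnWn} by the same substitution $x_n=A_n(1-W_n)/(1-t)$ into the displayed expression for $H_n$ in terms of $x_n$ and $x_n'$, and your three substitution identities and the treatment of the quadratic term are all right. (One minor slip in your bookkeeping forecast: $A_n\beta^2/(4x_n)$ has its pole at $W_n=1$, not $W_n=0$ --- it equals $\beta^2(t-1)/(4(W_n-1))$ outright --- while the residue $-\alpha^2t/4$ at $W_n=0$ is assembled from the square-free piece of the quadratic term, the second summand, and $t(\alpha^2-1)A_n/\bigl(4((1-t)x_n-A_n)\bigr)$.)
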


Inserting the large $n$ expansion of $W_n(t)$, i.e. formula \eqref{Wnexpan}, into \eqref{HnWn}, we come to the expansion of $H_n(t)$.

\begin{theorem} For large $n$ and fixed $t$, we have
\be\label{Hnexpan}
\begin{aligned}
H_n(t)=&n^2t+\left((\alpha+\beta)t-\alpha\sqrt{t}\right)n+\frac{\alpha}{4}\left((\alpha+2\beta)t-2(\alpha+\beta)\sqrt{t}+\alpha\right)\\
&+\frac{\alpha(1-t)(1+(4\beta^2-1)t)}{32\sqrt{t}\;n}+O_{\alpha,\beta}\left(\frac{1}{n^2t}\right),
\end{aligned}
\ee
where $O_{\alpha,\beta}\left(\frac{1}{n^2t}\right)$ depends on $\alpha$ and $\beta$, and for arbitrary $t$, $O_{\alpha,\beta}\left(\frac{1}{n^2t}\right)\rightarrow0$ as $n\rightarrow\infty$.
\end{theorem}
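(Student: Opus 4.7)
The plan is to substitute the expansion \eqref{Wnexpan} of $W_n(t)$, together with its termwise derivative $W_n'(t)=\tfrac{\alpha}{4n\sqrt{t}}+O(n^{-2})$, into the closed-form identity \eqref{HnWn} for $H_n(t)$, and then reorganize the result as a power series in $1/n$. The coefficients $a_0,a_1,a_2,a_3$ already produced in the proof of the preceding proposition supply everything needed to reach the $O(n^{-1})$ order stated in \eqref{Hnexpan}; higher $a_k$ are only required to control the error.

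The first step is to locate each source of contribution to orders $n^2,\,n,\,1,\,n^{-1}$. The purely polynomial piece
\[
\tfrac{1}{4}\bigl(t((2n+\alpha+\beta)^2+1)+\alpha^2-\beta^2-1\bigr)
\]
in \eqref{HnWn} is already a polynomial in $n$ and furnishes the $n^2 t$ head together with the $(\alpha+\beta)tn$ piece of the coefficient of $n$. Two further terms are individually divergent in $n$ and must be combined with care: $-\tfrac{A_n(2n-1+\alpha+\beta)}{4}W_n$ behaves like $-\tfrac{\alpha\sqrt{t}}{2}n+O(1)$ once $W_n\sim\alpha\sqrt{t}/(2n)$ is inserted, while $-\tfrac{\alpha^2 t}{4W_n}$ contributes an additional $-\tfrac{\alpha\sqrt{t}}{2}n+O(1)$. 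Their sum produces exactly the $-\alpha\sqrt{t}\,n$ correction appearing in \eqref{Hnexpan}, and carrying $W_n$ to three orders in $1/n$ is what harvests the constant and $1/n$ terms from each of them.

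The remaining pieces $\tfrac{t(1-t)W_n'}{2(W_n-t)}$, $\tfrac{\beta^2(t-1)}{4(W_n-1)}$, $\tfrac{t(t-1)}{4(W_n-t)}$ and $\tfrac{t^2(1-t)^2(W_n')^2}{4W_n(W_n-1)(W_n-t)}$ are all of order $O(1)$ or smaller in $n$ and are handled by Taylor expanding $(W_n-1)^{-1}=-1-W_n-W_n^2-\cdots$ and $(W_n-t)^{-1}=-t^{-1}-W_n t^{-2}-\cdots$, together with the useful factorization $W_n(W_n-1)(W_n-t)=tW_n+O(W_n^2)$. It is precisely this last factor that makes $t^{-1/2}$ appear in the coefficient $\tfrac{\alpha(1-t)(1+(4\beta^2-1)t)}{32\sqrt{t}\,n}$, and the progressively worse $t$-singularities at higher orders in the expansion of $W_n$ explain the shape $O_{\alpha,\beta}(1/(n^2 t))$ of the error.

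The main obstacle, beyond the volume of bookkeeping, is the delicate cancellation at order $n$ between the two divergent contributions identified above: an expansion of $W_n$ truncated at only one order in $1/n$ would return an incorrect coefficient. Once $a_0,\dots,a_3$ have been pinned down from the Painlev\'e VI equation \eqref{Wnpvi}, the verification reduces to substituting the series for $W_n$ and $W_n'$ into \eqref{HnWn}, expanding each rational function of $W_n$ as a geometric series about its singular point, and collecting powers of $n^{-1}$; this yields the four explicit coefficients in \eqref{Hnexpan} together with an error of the stated order.
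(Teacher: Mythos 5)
Your proposal is correct and follows essentially the same route as the paper: the paper's proof consists precisely of inserting the large-$n$ expansion \eqref{Wnexpan} (and its termwise derivative) into the identity \eqref{HnWn} and collecting powers of $n^{-1}$, which is what you carry out in more detail. Your accounting of which terms feed which orders, and of why $W_n$ must be known to three orders because of the $O(n^{2})$ and $O(n)$ prefactors, matches the computation the paper leaves implicit.
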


\section{Proof of Theorem \ref{main1} and Corollary \ref{main2}}
\subsection{Proof of Theorem \ref{main1}: Asymptotics of $\mathbb{P}(t,\alpha,\beta,n)$ at the hard edge}
Recall that
\begin{align}\label{HnlogP}
H_n(t)=t(t-1)\frac{d}{dt}\log \mathbb{P}(t,\alpha,\beta,n).
\end{align}
Integrating \eqref{Hnexpan} will provide us with the asymptotic expansion of $\mathbb{P}(t,\alpha,\beta,n)$ in large $n$. Combining it with another asymptotic expression of $\mathbb{P}(t,\alpha,\beta,n)$ obtained in section 2, i.e. formula \eqref{Pasy1}, we derive the asymptotic formula for $\mathbb{P}(t,\alpha,\beta,n)$ with $n$ large and $t$ arbitrary.
\begin{theorem}
For any $t$ and large $n$, we have
\be\label{Pasy2}
\begin{aligned}
\log \mathbb{P}(t,\alpha,\beta,n)=&n^2\log(1-t)+n\left(\beta\log(1-t)+2\alpha\log(1+\sqrt{t})-2\alpha\log2\right)\\
&+\alpha(\alpha+\beta)\log (1+\sqrt{t})-\frac{\alpha^2}{4}\log t-\alpha(\alpha+\beta)\log2+\frac{3}{4}\alpha^2+\log \frac{G(\alpha+1)}{(2\pi)^{\alpha/2}}\\
&+\left[\frac{(n+\beta)^2}{2}-\frac{1}{12}\right]\log (n+\beta)+\left[\frac{n^2}{2}-\frac{1}{12}\right]\log n\\
&+\left[\frac{(2n+\alpha+\beta)^2}{2}-\frac{1}{12}\right]\log (2n+\alpha+\beta)+\left[\frac{1}{12}-\frac{(2n+\beta)^2}{2}\right]\log (2n+\beta)\\
&+\left[\frac{1}{12}-\frac{(n+\alpha)^2}{2}\right]\log(n+\alpha)+\left[\frac{1}{12}-\frac{(n+\alpha+\beta)^2}{2}\right]\log(n+\alpha+\beta)\\
&+O_{\alpha,\beta}(n^{-1})+O_{\alpha,\beta}\left(\frac{1}{n\sqrt{t}}\right),
\end{aligned}
\ee
where $O_{\alpha,\beta}(n^{-1})$ is independent of $t$ and $O_{\alpha,\beta}(n^{-1})\rightarrow0$ as $n\rightarrow\infty$, and for any $t$, $O_{\alpha,\beta}\left(\frac{1}{n\sqrt{t}}\right)\rightarrow0$ as $n\rightarrow\infty$.
\end{theorem}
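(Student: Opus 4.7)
The plan is to combine the two complementary asymptotic expressions already in hand: formula \eqref{Pasy1}, valid as $n\to\infty$ with $t\to 1$, and the large-$n$ expansion \eqref{Hnexpan} of $H_n(t)$ for fixed $t$. They are joined by the identity $H_n(t)=t(t-1)\frac{d}{dt}\log\mathbb{P}(t,\alpha,\beta,n)$, which upon division by $t(t-1)$ and integration converts \eqref{Hnexpan} into an expansion of $\log\mathbb{P}(t,\alpha,\beta,n)$ modulo an additive function of $(n,\alpha,\beta)$; this additive function is then pinned down by matching against \eqref{Pasy1} at an auxiliary reference point $t_0\in(t,1)$ in the limit $t_0\to 1$.

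Concretely, the identity
\[\log\mathbb{P}(t,\alpha,\beta,n)=\log\mathbb{P}(t_0,\alpha,\beta,n)-\int_t^{t_0}\frac{H_n(s)}{s(s-1)}\,ds\]
is the starting point. Substituting \eqref{Hnexpan} termwise, the required primitives are elementary:
\[\int\frac{ds}{s-1}=\log(1-s),\qquad\int\frac{ds}{s(s-1)}=\log\frac{1-s}{s},\qquad\int\frac{ds}{\sqrt{s}(s-1)}=\log\frac{1-\sqrt{s}}{1+\sqrt{s}},\]
supplemented by elementary integrals of $s^{-1/2}$ and $s^{-3/2}$ generated by the $n^{-1}$ term of $H_n$. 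Splitting logarithms via $\log(1-s)=\log(1-\sqrt{s})+\log(1+\sqrt{s})$ and using the identity $(1-t)/(1-\sqrt{t})=1+\sqrt{t}$ recasts the contributions from the $s=t$ endpoint into exactly the combinations $n^2\log(1-t)+n\beta\log(1-t)+2n\alpha\log(1+\sqrt{t})$ and $\alpha(\alpha+\beta)\log(1+\sqrt{t})-\tfrac{\alpha^2}{4}\log t$ that appear in \eqref{Pasy2}.

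At the $s=t_0$ endpoint the integral supplies divergent $\log(1-t_0)$ and $\log(1-\sqrt{t_0})$ pieces; substituting \eqref{Pasy1} for $\log\mathbb{P}(t_0,\alpha,\beta,n)$ then shows that these divergences cancel. At the $n^2$ and $n$ levels the $n(n+\beta)\log(1-t_0)$ from \eqref{Pasy1} together with the integrated $n$-scale terms collapse to the finite limit $-2n\alpha\log 2$. At the constant-in-$n$ level, a quick identity $-\tfrac{\alpha(\alpha+2\beta)}{4}+\tfrac{\alpha(\alpha+\beta)}{2}-\tfrac{\alpha^2}{4}=0$ kills the $\log(1-\sqrt{t_0})$ divergence, while the $\log(1+\sqrt{t_0})$ coefficient aggregates to $-\alpha(\alpha+\beta)$ and produces the finite $-\alpha(\alpha+\beta)\log 2$; together with the $\tfrac{3}{4}\alpha^2+\log[G(\alpha+1)/(2\pi)^{\alpha/2}]$ and Barnes-$G$ block inherited from \eqref{Pasy1}, this reproduces \eqref{Pasy2}.

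The main obstacle is not any one of the integrations but the uniform bookkeeping of error terms. The $O_{n,\alpha,\beta}(1-t_0)$ remainder of \eqref{Pasy1} vanishes as $t_0\to 1$ by construction, and the $O_{\alpha,\beta}(n^{-1})$ piece is $t_0$-independent and transfers directly. The delicate step is controlling the integral of the $O_{\alpha,\beta}(1/(n^2 s))$ remainder in \eqref{Hnexpan} after division by $s(s-1)$: one must verify that it remains bounded as $t_0\to 1$ and, when combined with the explicit integral of the $1/(n\sqrt{s})$ scale in $H_n$, yields the stated remainder $O_{\alpha,\beta}(1/(n\sqrt{t}))$. This uniformity-in-$t$ check, rather than any single algebraic identity, is where the real work of the proof lies.
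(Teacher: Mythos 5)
Your proposal is correct and follows essentially the same route as the paper: integrate $H_n(\xi)/(\xi(1-\xi))$ from $t$ to a reference point $t_0$ using the expansion \eqref{Hnexpan}, substitute \eqref{Pasy1} for $\log\mathbb{P}(t_0,\alpha,\beta,n)$, verify that the $\log(1-t_0)$ divergences cancel, and let $t_0\rightarrow1$. The only cosmetic difference is that you split $\log(1-\xi)$ into $\log(1-\sqrt{\xi})+\log(1+\sqrt{\xi})$ and cancel the $\log(1-\sqrt{t_0})$ pieces by an explicit coefficient identity, whereas the paper combines the integrand terms so that only $\log(1+\sqrt{\cdot})$ primitives ever appear.
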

\begin{proof}
Integrating both sides of \eqref{HnlogP} from $t$ to $t_0$ with $t_0$ near 1, we find
\begin{align*}
\log \mathbb{P}(t_0,\alpha,\beta,n)-\log \mathbb{P}(t,\alpha,\beta,n)=&\int_t^{t_0}\frac{d}{d\xi}\log \mathbb{P}(\xi,\alpha,\beta,n)d\xi=\int_t^{t_0}\frac{H_n(\xi)}{\xi(\xi-1)}d\xi,
\end{align*}
namely,
\begin{align*}
\log \mathbb{P}(t,\alpha,\beta,n)-\log \mathbb{P}(t_0,\alpha,\beta,n)
=\int_t^{t_0}\frac{H_n(\xi)}{\xi(1-\xi)}d\xi.
\end{align*}
Hence, according to \eqref{Hnexpan}, we obtain
\be\label{Ptt0}
\begin{aligned}
\log \mathbb{P}(t,\alpha,\beta,n)&-\log \mathbb{P}(t_0,\alpha,\beta,n)\\
=&n(n+\beta)\log\frac{1-t}{1-t_0}+\alpha(2n+\alpha+\beta)\log\frac{1+\sqrt{t}}{1+\sqrt{t_0}}+\frac{\alpha^2}{4}\log\frac{t_0}{t}\\
&
+O_{\alpha,\beta}\left(\frac{1}{n\sqrt{t}}\right)+O_{\alpha,\beta}\left(\frac{1}{n\sqrt{t_0}}\right).
\end{aligned}
\ee
Replacing $t$ by $t_0$ in \eqref{Pasy1}, we have
\begin{align*}
\log\mathbb{P}(t_0,\alpha,\beta,n)
=&n(n+\beta)\log(1-t_0)+n\alpha\log t_0+\log \frac{G(\alpha+1)}{(2\pi)^{\alpha/2}}+\frac{3}{4}\alpha^2\\
&+\left[\frac{n^2}{2}-\frac{1}{12}\right]\log n+\left[\frac{(n+\beta)^2}{2}-\frac{1}{12}\right]\log (n+\beta)\\
&+\left[\frac{(2n+\alpha+\beta)^2}{2}-\frac{1}{12}\right]\log (2n+\alpha+\beta)+\left[\frac{1}{12}-\frac{(n+\alpha)^2}{2}\right]\log(n+\alpha)\\
&+\left[\frac{1}{12}-\frac{(2n+\beta)^2}{2}\right]\log (2n+\beta)+\left[\frac{1}{12}-\frac{(n+\alpha+\beta)^2}{2}\right]\log(n+\alpha+\beta)\\
&+O_{\alpha,\beta}(n^{-1})+O_{n,\alpha,\beta}(1-t_0).
\end{align*}
Substituting it into \eqref{Ptt0}, we see that the term $n(n+\beta)\log(1-t_0)$ disappears. By sending $t_0\rightarrow1$ in the resulting expression, we complete our proof.
\end{proof}

Putting
\[t=\frac{s}{4n^2}\]
in \eqref{Pasy2} and taking the series expansion of its right hand side in large $n$, by sending $n$ to $\infty$, we find
\begin{align}\label{limPs}
\lim_{n\rightarrow\infty}\log \mathbb{P}\left(\frac{s}{4n^2},\alpha,\beta,n\right)=&\log\det(I-K_{\mathrm{Bessel}})\nonumber\\
=&-\frac{s}{4}+\alpha\sqrt{s}-\frac{\alpha^2}{4}\log s+\log \frac{G(\alpha+1)}{(2\pi)^{\alpha/2}}+O(s^{-1/2}),\qquad s\rightarrow\infty.
\end{align}
The subsequent terms in \eqref{limPs} was derived in the joint work of the authors and Fan \cite[formula (5.12)]{LyuChenFan}. However, the constant term was not identified there. Combining formula (5.12) therein with our formula \eqref{limPs}, we come to the full asymptotic expression \eqref{Pfullexpan}. This completes the proof of Theorem \ref{main1}.

\begin{remark}
It turns out that, at the hard edge, the constant term in the asymptotics of the smallest eigenvalue distribution of JUE with the weight $x^{\alpha}(1-x)^{\beta}$ is independent of $\beta$.
Alternatively, we may see the $\beta-$independence by using the following definition of $\mathbb{P}(t,\alpha,\beta,n)$:
\[\mathbb{P}(t,\alpha,\beta,n)=\frac{\frac{1}{n!}\int_{[t,1]^n}\prod\limits_{1\leq i<j\leq n}(x_i-x_j)^2\prod\limits_{k=1}^n x_k^{\alpha}(1-x_k)^{\beta}dx_k}{\frac{1}{n!}\int_{[0,1]^n}\prod\limits_{1\leq i<j\leq n}(x_i-x_j)^2\prod\limits_{k=1}^n x_k^{\alpha}(1-x_k)^{\beta}dx_k}.\]
By setting $\alpha=0$ and via a change of variables $x_{\ell}=t+(1-t)y_{\ell},\ell=1,2,\cdots,n$, we get
\[\mathbb{P}(t,0,\beta,n)=(1-t)^{n(n+\beta)},\]
so that
\[\lim_{n\rightarrow\infty}\log\mathbb{P}\left(\frac{s}{4n^2},0,\beta,n\right)=-\frac{s}{4},\]
which is independent of $\beta$. This may indicate the $\beta-$independence of the constant term in the asymptotic expansion of $\log\mathbb{P}\left(\frac{s}{4n^2},\alpha,\beta,n\right)$ in large $s$ with $n\rightarrow\infty$.
\end{remark}

\begin{remark}
The largest eigenvalue distribution $\mathbb{P}_L(t,\alpha,\beta,n)$, i.e. the probability that all eigenvalues of JUE with the weight $x^{\alpha}(1-x)^{\beta}$ is not greater than $t$, is connected with the smallest eigenvalue distribution of our interest, i.e. $\mathbb{P}(t,\alpha,\beta,n)$, by
\[\mathbb{P}_L(t,\alpha,\beta,n)=\mathbb{P}(1-t,\beta,\alpha,n).\]
Here we remind the reader to note the exchange of $\alpha$ and $\beta$. Then, at the hard edge, we have
\[\mathbb{P}_L\left(1-\frac{s}{4n^2},\alpha,\beta,n\right)=\mathbb{P}\left(\frac{s}{4n^2},\beta,\alpha,n\right).\]
Therefore, as $n\rightarrow\infty$, it follows from \eqref{Pfullexpan} that the probability that all eigenvalues of JUE lie in the interval $[0,1-s/(4n^2)]$ has the following asymptotic expression
\begin{align*}
\lim_{n\rightarrow\infty}\log \mathbb{P}_L&\left(1-\frac{s}{4n^2},\alpha,\beta,n\right)=\log\det(I-K_{\mathrm{Bessel}}^{(\beta)})\\
=&-\frac{s}{4}+\beta \sqrt{s}-\frac{\beta^2}{4}\log s+\log \frac{G(\beta+1)}{(2\pi)^{\beta/2}}+\frac{\beta }{8}s^{-1/2}\\
&+\frac{\beta
   ^2}{16}s^{-1}
   +\left(\frac{\beta^3}{24}+\frac{3\beta}{128}\right)s^{-3/2}+\left(\frac{\beta^4}{32}+\frac{9\beta^2}{128}\right)s^{-2}\\
&
+\left(\frac{\beta^5}{40}+\frac{9\beta^3}{64}+\frac{45\beta}{1024}\right)s^{-5/2}+O\left(s^{-3}\right),\qquad s\rightarrow\infty.
\end{align*}
Here $K_{\mathrm{Bessel}}^{(\beta)}$ is the integral operator with the kernel
\[K_{\mathrm{Bessel}}^{(\beta)}(x,y):=\frac{J_{\beta}(\sqrt{x})\sqrt{y}J_{\beta}'(\sqrt{y})-\sqrt{x}J_{\beta}'(\sqrt{x})J_{\beta}(\sqrt{y})}{2(x-y)},\]
which is the Bessel kernel $K_{\mathrm{Bessel}}(x,y)$ defined by \eqref{Besal} with $\alpha$ replaced by $\beta$.
We observe that the constant term in the above asymptotics now reads
\[\log\frac{G(\beta+1)}{(2\pi)^{\beta/2}},\]
 which is independent of $\alpha$.
\end{remark}

\begin{remark}
As we know, the hard edge scaling limit of the smallest eigenvalue distribution of Laguerre unitary ensembles (LUE for short) is also the Fredholm determinant of the Bessel kernel \cite{Forrester1993, TW161}. Denoting by $\widetilde{\mathbb{P}}(t,\alpha,n)$ the probability that all eigenvalues of LUE are in the interval $[t,\infty)$, we have
\[\lim_{n\rightarrow\infty}\widetilde{\mathbb{P}}\left(\frac{s}{4n},\alpha,n\right)=\det\left(I-K_{\mathrm{Bessel}}\right),\]
Based on the finite $n$ results obtained via the ladder operator approach and by using Dyson's Coulomb fluid method, the authors and Fan \cite{LyuChenFan} showed that, for sufficiently large $n$, the constant term in the asymptotic expansion of $\widetilde{\mathbb{P}}\left(\frac{s}{4n},\alpha,n\right)$ in large $s$ satisfies a first order difference equation in $\alpha$, i.e.
\[c_1(\alpha+1)-c_1(\alpha)=\log\left(\frac{\Gamma(\alpha+1)}{\sqrt{2\pi}}\right),\]
which has solutions
\begin{align}\label{c1c0}
c_1(\alpha)=\log\left(\frac{G(\alpha+1)}{(2\pi)^{\alpha/2}}\right)+c_0,
\end{align}
where $c_0$ is a constant independent of $\alpha$. To determine the value of $c_0$, we need an initial condition for $c_1(\alpha)$, which regrettably was not provided in \cite{LyuChenFan}.

We observe that the initial value $c_1(0)$ can be obtained via arguments similar to the ones used in the previous remark for the JUE case. Actually, by definition, we know that $\widetilde{\mathbb{P}}(t,\alpha,n)$ is the quotient of two multiple integrals, namely,
\[\widetilde{\mathbb{P}}(t,\alpha,n)=\frac{\frac{1}{n!}\int_{[t,\infty)^n}\prod\limits_{1\leq i<j\leq n}(x_i-x_j)^2\prod\limits_{k=1}^n x_k^{\alpha}\e^{-x_k}dx_k}{\frac{1}{n!}\int_{[0,\infty)^n}\prod\limits_{1\leq i<j\leq n}(x_i-x_j)^2\prod\limits_{k=1}^n x_k^{\alpha}\e^{-x_k}dx_k}.\]
Setting $\alpha=0$ and by changing variables $x_{\ell}=y_{\ell}+t, \ell=1,2,\cdots,n$, we find
\[\widetilde{\mathbb{P}}(t,0,n)=\e^{-nt},\]
so that
\[\lim_{n\rightarrow\infty}\widetilde{\mathbb{P}}\left(\frac{s}{4n},0,n\right)=\e^{-s/4}.\]
This combined with formula (3.14) of \cite{LyuChenFan} leads to
\[c_1(0)=0=c_0.\]
Hence, it follows from \eqref{c1c0} that
\[c_1(\alpha)=\log\left(\frac{G(\alpha+1)}{(2\pi)^{\alpha/2}}\right),\]
which also gives the constant term in the asymptotics of the Fredholm determinant of the Bessel kernel.
\end{remark}

\subsection{Proof of Corollary \ref{main2}}
We denote by $\widetilde{\mathbb{P}}(a,\beta,n)$ the probability that the interval $(-a,a),a>0,$ contains no eigenvalues of JUE with the weight $(1-x^2)^{\beta}, x\in[-1,1],\beta>0$. As is illustrated in \cite[proof of Theorem 4.1]{LyuChenFan}, under the scaling
\[b=na,\]
 and as $n\rightarrow\infty$, $\widetilde{\mathbb{P}}(a,\beta,n)$ is connected with $\mathbb{P}(t,\alpha,\beta,n)$ by the relation
\begin{align}\label{JUEsym}
\lim_{n\rightarrow\infty}\widetilde{\mathbb{P}}\left(\frac{b}{n},\beta,n\right)=\lim_{n\rightarrow\infty}\mathbb{P}\left(\frac{b^2}{4n^2},-\frac{1}{2},\beta,n\right)+\lim_{n\rightarrow\infty}\mathbb{P}\left(\frac{b^2}{4n^2},\frac{1}{2},\beta,n\right).
\end{align}

By setting $s=b^2$ and $\alpha=\pm1/2$ in \eqref{Pfullexpan}, we find as $b\rightarrow\infty$
\begin{equation*}
\begin{aligned}
\lim_{n\rightarrow\infty}\log \mathbb{P}\left(\frac{b^2}{4n^2},-\frac{1}{2},\beta,n\right)
=&-\frac{b^2}{4}-\frac{b}{2}-\frac{1}{8}\log b+\log \left( G\left(1/2\right)\cdot(2\pi)^{1/4}\right)-\frac{1}{16b}\\
&+\frac{1}{64b^2}-\frac{13}{768b^3}+\frac{5}{256b^4}-\frac{413}{10240b^5}+O\left(b^{-6}\right),
\end{aligned}
\end{equation*}
and
\begin{equation*}
\begin{aligned}
\lim_{n\rightarrow\infty}\log \mathbb{P}\left(\frac{b^2}{4n^2},\frac{1}{2},\beta,n\right)
=&-\frac{b^2}{4}+\frac{b}{2}-\frac{1}{8}\log b+\log \left( \frac{G\left(3/2\right)}{(2\pi)^{1/4}}\right)+\frac{1}{16b}\\
&+\frac{1}{64b^2}+\frac{13}{768b^3}+\frac{5}{256b^4}+\frac{413}{10240b^5}+O\left(b^{-6}\right).
\end{aligned}
\end{equation*}
It follows from \eqref{JUEsym} that
\begin{equation}\label{PJs1}
\begin{aligned}
\lim_{n\rightarrow\infty}\log\widetilde{\mathbb{P}}\left(\frac{b}{n},\beta,n\right)
=&-\frac{b^2}{2}-\frac{\log b }{4}+\log\left(G(1/2)\cdot G(3/2)\right)+\frac{1}{32b^2}\\
&+\frac{5}{128b^4}+O(b^{-6}),\qquad b\rightarrow\infty.
\end{aligned}
\end{equation}
Noting that
\[ G\left(3/2\right)=G\left(1/2\right)\Gamma\left(1/2\right),\qquad\Gamma\left(1/2\right)=\sqrt{\pi},\]
and $G\left(1/2\right)$ is given by \cite{Voros}
\[ G\left(1/2\right)={\rm e}^{3\zeta'(-1)/2}\pi^{-1/4}2^{1/24},\]
we have
\[G(1/2)\cdot G(3/2)={\rm e}^{3\zeta'(-1)}2^{1/12}.\]
Combining it with \eqref{PJs1}, we get \eqref{PJs}.

\begin{remark}
We observe that the first three terms in the expansion \eqref{PJs} are the same as the ones in the asymptotics of the Fredholm determinant of the sine kernel \cite{DIKZ2007, Ehrhardtsine, Krasovsky2004}.
\end{remark}

\section*{Acknowledgments}
Shulin Lyu was supported by National Natural Science Foundation of China under grant number 11971492. Yang Chen was supported by the Macau Science and Technology Development Fund under grant number FDCT 023/2017/A1 and by the University of Macau under grant number MYRG 2018-00125-FST.

\end{document}